\newtheorem{theorem}{Theorem}
\newtheorem{lemma}[theorem]{Lemma}
\newtheorem{corollary}[theorem]{Corollary}
\newcommand{\RR}{{\mathbb R}}
\newcommand{\Cc}{\mathcal C}
\newcommand{\Ac}{\mathcal A}
\newcommand{\Hc}{\mathcal H}
\newcommand{\Pc}{\mathcal P}
\newcommand{\Nc}{\mathcal N}
\begin{document}

\title[Bureaucratic  sets]{`Bureaucratic' set systems, and their role in phylogenetics}
\author{David Bryant and Mike Steel}

\thanks{}

\address{David Bryant: Department of Mathematics and Satistics, University of Otago, Dunedin, New Zealand.\\
Mike Steel: Department of Mathematics and  Statistics, University of Canterbury, Christchurch, New Zealand}

\email{david.bryant@otago.ac.nz, mike.steel@canterbury.ac.nz}

\subjclass{05C05; 92D15}

\begin{abstract}
We say that a collection $\Cc$ of subsets of $X$ is  {\em bureaucratic} if every maximal hierarchy on $X$ contained in $\Cc$ is also maximum. We characterise bureaucratic set systems and show how they arise in phylogenetics. This framework has several useful algorithmic consequences: we generalize some earlier results and  derive a polynomial-time algorithm for a parsimony problem arising in phylogenetic networks.
\end{abstract}

\maketitle

\section{Bureaucratic sets and their characterization}

We first recall some standard phylogenetic terminology (for more details, the reader can consult \cite{sem}).
Recall that a {\em hierarchy} $\Hc$ on a finite set $X$  is a collection of sets with the property that the intersection of any two sets is either empty or equal to one of the two sets; we also assume that $X \in \Hc$.

\begin{figure}[ht]
\begin{center}
\resizebox{11cm}{!}{
\includegraphics{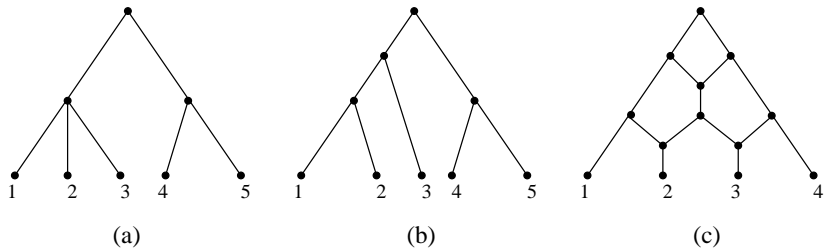}
}
\caption{(a): A rooted tree $T$ with leaf set $X=\{1,2,3,4,5\}$, and with cluster set $c(T)$ being equal to  the hierarchy $\Hc$ consisting of the sets $\{1,2,3\}, \{4,5\}$ and the trivial clusters. (b): A binary tree $T$ with a cluster set consisting  of $\Hc \cup \{\{1,2\}\}$. (c): A binary and planar phylogenetic network $\Nc$ over $X=\{1,2,3,4\}$
with a soft-wired cluster set $sw(\Nc)$ consisting of $\{1,2\}, \{2,3\}, \{3,4\}, \{1,2,3\}, \{2,3,4\}$ and the trivial clusters.}
\label{fig1}
\end{center}
\end{figure}

\noindent A hierarchy is {\em maximum} if $|\Hc| = 2|X|-1$, which is the largest possible cardinality. In this case $\Hc$ corresponds to the set of clusters $c(T)$ of some rooted binary tree $T$ with leaf set $X$ (a {\em cluster} of $T$ is the set of leaves that are separated from the root of the tree by any vertex).   A maximum hierarchy necessarily contains $\{x\}$ for each $x \in X$, as well as  $X$ itself;  we will refer to these $|X|+1$ sets as the {\em trivial clusters} of $X$. More generally, any hierarchy containing all the trivial clusters corresponds to the clusters $c(T)$ of a rooted tree $T$ with leaf set $X$ (examples of these concepts are illustrated in Fig.~\ref{fig1}(a),(b)).  Note that a hierarchy $\Hc$ is maximum if and only if  (i) $\Hc$ contains all the trivial clusters, and (ii) each set $C \in \Hc$ of size greater than 1 can be written
as a disjoint union $C= A\sqcup B$, for two (disjoint) sets $A,B \in \Hc$.   

We now introduce a new notion.

\bigskip
\noindent {\bf Definition:}  We say that a  collection $\Cc$ of subsets of a finite set $X$ is a {\em bureaucracy} if (i) $\Cc \neq \emptyset$ and $\emptyset \not\in \Cc$, and (ii) every hierarchy $\Hc \subseteq \Cc$ can be extended to a maximum hierarchy $\Hc'$ such that $\Hc \subseteq \Hc' \subseteq \Cc$. In this case, we say that the collection is {\em bureaucratic}. 

\bigskip

Simple examples of bureaucracies  include two extreme cases:  the set of clusters of a binary tree, and  the set $\Pc(X)$ of all non-empty subsets of $X$.    Notice that $\{\{a\}, \{b\}, \{c\}, \{a,b\}, \{a,b,c\}\}$ and $\{\{a\}, \{b\}, \{c\}, \{b,c\}, \{a,b,c\}\}$ are both bureaucratic subsets of $X=\{a,b,c\}$ but their intersection, 
$\{\{a\}, \{b\}, \{c\}, \{a,b,c\}\}$, is not. In particular,  for an arbitrary subset $Y$ of $X$ (e.g.  $Y=\{\{a\}, \{b\}, \{c\}, \{a,b,c\}\}$), there may not be a unique minimal
bureaucratic subset of $X$ containing $Y$.

In the next section we describe a more extensive list of examples, but first we describe some properties and provide a characterization of bureaucracies.  
In the following lemma, given two sets $A$ and  $B$ from $\Cc$ we say that $B$ {\em covers} $A$  if $A \subsetneq B$ and there is no set $C \in \Cc$ with $A \subsetneq C \subsetneq B$.
\begin{lemma}
\label{mainlem}
\mbox{}
If  $\Cc$ is bureaucratic then:
\begin{itemize}
\item[{\rm (i)}] For any pair $A,B \in \Cc$, if $B$ covers $A$ then $B - A \in \Cc$.
\item[{\rm (ii)}] For  any  $C \in \Cc$ with $|C|>1$,  we can write $C = A \sqcup B$ for  (disjoint) sets $A,B \in \Cc$.
\end{itemize}
\end{lemma}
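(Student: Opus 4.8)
The plan is to exploit the defining property of a bureaucracy directly. For each claim I would manufacture a small hierarchy inside $\Cc$ from the sets in question, extend it to a maximum hierarchy $\Hc'$ with $\Hc' \subseteq \Cc$, and then read off the required decomposition from the rooted-binary-tree structure of $\Hc'$, using the stated characterization that a maximum hierarchy is exactly the cluster set of such a tree, in which every cluster of size greater than one is the disjoint union of its two child clusters. As a preliminary remark, $\{X\}$ is itself a hierarchy contained in $\Cc$, so extending it shows that $\Cc$ contains $X$ together with all the trivial clusters; I will use $X \in \Cc$ freely below.

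For (ii), let $C \in \Cc$ with $|C|>1$. If $C = X$ take $\Hc = \{X\}$, otherwise take $\Hc = \{X,C\}$; in either case $\Hc$ is a chain, hence a hierarchy, and $\Hc \subseteq \Cc$. Extending $\Hc$ to a maximum hierarchy $\Hc'$ with $\Hc \subseteq \Hc' \subseteq \Cc$ places $C$ inside a maximum hierarchy, and since $|C|>1$ the characterization of maximum hierarchies yields $C = A \sqcup B$ with disjoint $A,B \in \Hc' \subseteq \Cc$, as required.

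For (i), suppose $B$ covers $A$, so in particular $A \subsetneq B$. The set $\{X,A,B\}$ is a chain, hence a hierarchy contained in $\Cc$; I extend it to a maximum hierarchy $\Hc' \subseteq \Cc$ and view $\Hc'$ as the cluster set of a rooted binary tree $T$. Since $A \subsetneq B \subseteq X$ we have $A \neq X$, so the node of $T$ with cluster $A$ has a parent, whose cluster $P$ satisfies $A \subsetneq P$ and splits as $P = A \sqcup A'$, where $A'$ is the sibling cluster. The clusters of $\Hc'$ strictly containing $A$ form a chain (laminarity), and $P$ is the smallest of them, so $P \subseteq B$. Now the covering hypothesis does the work: $P \in \Hc' \subseteq \Cc$ and $A \subsetneq P \subseteq B$ force $P = B$, since otherwise $P$ would be a set of $\Cc$ with $A \subsetneq P \subsetneq B$. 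Hence $B = A \sqcup A'$ and $B - A = A' \in \Hc' \subseteq \Cc$.

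The main obstacle is precisely the step just flagged: everything hinges on arguing that the covering relation in $\Cc$ forces $A$ and $B$ to sit in a parent–child relationship inside the extended tree $\Hc'$, for only then is the complement $B-A$ a single cluster of $\Hc'$ rather than a union of several sibling clusters along the path from $A$ up to $B$. Once this is secured the conclusion is immediate. I would also double-check the degenerate cases where $B = X$ or $A$ is a singleton, but neither disturbs the argument, as $A \neq X$ is all that is needed to guarantee that $A$ has a parent in $T$.
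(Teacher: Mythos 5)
Your proof is correct and follows essentially the same route as the paper's: seed a small hierarchy with the sets in question, extend it to a maximum hierarchy inside $\Cc$, and read off the decomposition from the binary-tree structure, with the covering hypothesis forcing $A$ to be a child of $B$ so that $B-A$ is its sibling cluster. The only quibble is that your preliminary claim that $\{X\}$ is a hierarchy \emph{contained in} $\Cc$ presupposes $X\in\Cc$; this is harmless, since $X\in\Cc$ follows by extending any hierarchy in $\Cc$ to a maximum one, and in fact you never need $X$ in the seed hierarchies --- the paper simply uses $\{A,B\}$ for (i) and $\{C\}$ for (ii).
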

\begin{proof}  
For Part (i),  suppose that $A,B \in \Cc$ and that $B$ covers $A$.  Let $\Hc= \{A, B\}$.  Then $\Hc$ is a hierarchy that is contained within $\Cc$ and so there exists a maximum hierarchy $\Hc' \subseteq \Cc$ that contains $\Hc$. 
Note that $A$ must be a maximal sub-cluster of $B$ in $\Hc'$ (as otherwise $B$ does not cover $A$) which requires that $B-A$ is a cluster of $\Hc'$ and thereby an element of $\Cc$.  

For Part (ii), observe that the set $\Hc=\{C\}$ is a hierarchy, and  the assumption that $\Cc$ is bureaucratic ensures  the existence of a maximum hierarchy $\Hc' \subseteq \Cc$ containing $\Hc$, and so $\Hc'$ contains the required sets $A,B$. 
\end{proof}

Note that the conditions described in Parts (i) and (ii) of Lemma~\ref{mainlem}, while they are necessary for $\Cc$ to be a bureaucracy, are not sufficient. For example, let
$X=\{1,2,3,4,5,6\}$ and let $\Cc$ be the union of $$\{\{1,2\}, \{3,4\}, \{5,6\}, \{1,2,3\}, \{4,5,6\}, \{3,4,5\}, \{1,2,6\}, \{1,5,6\}, \{2,3,4\}\}$$ with the 
set of the seven trivial clusters. Then $\Cc$ satisfies Parts (i) and (ii) of Lemma~\ref{mainlem}, yet $\Cc$ is not bureaucratic since $\Hc = \{\{1,2\}, \{3,4\}, \{5,6\}\}$ does not extend to a maximum hierarchy on $X$ using just elements from $\Cc$.

\begin{theorem}
\label{mainthm}
\mbox{}
 A collection $\Cc$ of subsets of $X$ is bureaucratic if and only if it satisfies the following two properties:
\begin{itemize}
\item{{\rm (P1)}} $\Cc$ contains all trivial clusters of $X$.
\item{{\rm (P2)}}If $\{C_1, C_2, \ldots,C_k\} \subseteq \Cc$ are disjoint and have union $\cup_i C_i$ in $\Cc$ then there are distinct $i,j$ such that $C_i \cup C_j \in \Cc$. 
\end{itemize}
\end{theorem}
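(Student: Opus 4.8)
The plan is to prove both implications, normalizing throughout to hierarchies that already contain every trivial cluster: given any hierarchy $\Hc \subseteq \Cc$, adjoining the trivial clusters (available in $\Cc$ once (P1) is in hand) keeps the family laminar, so I may always assume this. For the \emph{necessity} of (P1), since $\Cc \neq \emptyset$ I would take any $C \in \Cc$ and regard $\{C\}$ as a hierarchy contained in $\Cc$ (as in the proof of Lemma~\ref{mainlem}); the bureaucracy property extends it to a maximum hierarchy $\Hc' \subseteq \Cc$, and since every maximum hierarchy contains all trivial clusters, so does $\Cc$.

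The more interesting half of necessity is (P2). Suppose $C_1, \ldots, C_k \in \Cc$ are disjoint with $C = \bigcup_i C_i \in \Cc$; for $k = 2$ the union $C_1 \cup C_2 = C$ already witnesses the claim, so assume $k \ge 3$. The family $\{C, C_1, \ldots, C_k\}$ is laminar, hence a hierarchy inside $\Cc$, so it extends to a maximum hierarchy $\Hc' \subseteq \Cc$, that is, to a rooted binary tree. The key step is to choose $D \in \Hc'$ \emph{minimal} under inclusion subject to containing at least two of the blocks $C_i$; such a $D$ exists because $C$ qualifies. Writing $D = D_1 \sqcup D_2$ for its two children in $\Hc'$, minimality forces each child to contain at most one block, while laminarity forces each block lying in $D$ to lie entirely in $D_1$ or in $D_2$; hence exactly two blocks $C_i, C_j$ sit in $D$, one per child. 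Since every element of $D$ belongs to some block that is itself contained in $D$, we conclude $D = C_i \cup C_j$, and as $D \in \Hc' \subseteq \Cc$ this proves (P2).

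For \emph{sufficiency}, I would assume (P1) and (P2) and let $\Hc \subseteq \Cc$ be any hierarchy, normalized to contain all trivial clusters, arguing by induction on the deficiency $2|X| - 1 - |\Hc|$ that $\Hc$ extends to a maximum hierarchy inside $\Cc$; the base case is vacuous. If $\Hc$ is not maximum, the characterization of maximum hierarchies gives a cluster $C \in \Hc$ with $|C| > 1$ whose maximal proper sub-clusters $C_1, \ldots, C_m$ in $\Hc$ satisfy $m \ge 3$. These blocks are disjoint, lie in $\Cc$, and have union $C \in \Cc$, so (P2) supplies distinct $i, j$ with $C_i \cup C_j \in \Cc$. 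Because $m \ge 3$ we have $C_i \subsetneq C_i \cup C_j \subsetneq C$, so $C_i \cup C_j \notin \Hc$ (otherwise $C_i$ would not be maximal below $C$), and a short laminarity check---each set of $\Hc$ is disjoint from $C$, contains $C$, or is contained in a single block $C_l$---shows that $\Hc \cup \{C_i \cup C_j\}$ is again a hierarchy contained in $\Cc$. This strictly enlarges $\Hc$, so the inductive hypothesis completes the extension.

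The step I expect to be the crux is the necessity of (P2): the condition only asserts that \emph{some} pair of blocks can be merged, and producing the witness requires descending through a maximum extension to the lowest node that still straddles two blocks---it is precisely the minimality of $D$ that pins its two children down to two of the $C_i$ and identifies their union with an element of $\Cc$. By comparison, sufficiency is a clean greedy resolution of multifurcations, where the only points to watch are that adjoining $C_i \cup C_j$ preserves laminarity and that the process strictly decreases the deficiency $2|X| - 1 - |\Hc|$.
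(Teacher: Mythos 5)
Your proof is correct and follows essentially the same route as the paper's: necessity of (P2) is obtained by extending the hierarchy $\{C, C_1,\ldots,C_k\}$ to a maximum (binary) hierarchy inside $\Cc$ and taking a minimal cluster containing at least two of the blocks, and sufficiency by using (P2) to resolve a $k\geq 3$ multifurcation with a new cluster $C_i\cup C_j\in\Cc$. The only differences are cosmetic: you spell out the laminarity checks in more detail and frame sufficiency as an induction on the deficiency $2|X|-1-|\Hc|$, where the paper argues by contradiction that a maximal hierarchy in $\Cc$ must be maximum.
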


\begin{proof}
First suppose that $\Cc$ is bureaucratic.  Then $\Cc$ contains a maximal hierarchy; in particular, it contains all the trivial clusters, and so (P1) holds. For (P2), suppose that  $\Cc'$ is a collection of $k \geq 3$ disjoint  subsets of $X$, each an element of $\Cc$, and $\bigcup \Cc' \in \Cc$.  Then $\Hc =\Cc' \cup \{\bigcup \Cc'\}$ is a hierarchy. Let $\Hc'\subseteq \Cc$ be a maximal hierarchy on $X$ that contains $\Hc$ (this exists, since $\Cc$ is bureaucratic)  and let $C$ be a minimal subset of $X$ in  $\Hc'$ that contains the union of at least two elements of $\Cc'$.  Since $\Hc'$ is a binary hierarchy, and $\bigcup \Cc' \in \Hc'$, $C$ is precisely the union of exactly two  elements of $\Cc'$; since $C \in \Hc' \subseteq \Cc$, this establishes (P2).

Conversely, suppose that  a collection $\Cc$ of subsets of $X$ satisfies (P1) and (P2), and that $\Hc \subseteq \Cc$ is a maximal hierarchy which is contained within $\Cc$.  Suppose that $\Hc$ is not maximum (we will derive a contradiction). Then $\Hc$ contains a set $C$ that is the disjoint union of $k \geq 3$ maximal proper subsets $A_1, \ldots, A_k$, each belonging to $\Hc$ (and thereby $\Cc$).  Applying (P2) to  $\Cc' =\{A_1, \ldots, A_k\}$, there exists two sets, say $A_i, A_j$ for which $A_i \cup A_j \in \Cc.$  So, if we let $\Hc'=\Hc \cup \{A_i \cup A_j\}$, then we obtain a larger hierarchy containing $\Hc$ that is still contained within $\Cc$, which is a contradiction. 
This completes the proof. 
\end{proof}

\section{Examples of bureaucracies}
We have mentioned two extreme cases of bureaucracies, namely the set of clusters of a binary $X-$tree and  the full power set $\Pc(X)$. Here are some  further examples.
\begin{enumerate}

\item The set of intervals of $[n]=\{1,2,\ldots,n\}$ is a bureaucracy,  where an {\em interval} is a set $[i,j] = \{k:i\leq k\leq j\}$, $1 \leq i \leq j \leq n$.
\begin{proof}
Let $\Cc$ be the set of intervals of $[n]$. Then $\Cc$ contains the trivial clusters. Also, a disjoint collection $I_1, \ldots, I_k$, $k>2$ of intervals has union an interval if and only if every element of $[n]$ between $\min\bigcup I_j$ and $\max \bigcup I_j$ lies in  (exactly) one interval, in which case the union of any pair of consecutive intervals is an interval, so (P2) holds.  By Theorem \ref{mainthm}, $\Cc$ is bureaucratic. 
\end{proof}

Similarly, if we order the elements of $X$ in any fashion, we can define the set of {\em intervals on $X$} for that ordering by this construction (associating $x_i$ with $i$), and can thus obtain a bureaucracy. 

A natural question at this point is the following: Does the extension of intervals in a 1-dimensional lattice (Example 1) to rectangles in a 2-dimensional lattice also necessarily lead to bureaucracies?  The answer is `no' because condition (P2) can be violated due to the existence of subdivisions of  integral sized rectangles into $k>2$ disjoint squares of different integral sizes, the union of any two of which must therefore fail to be a rectangle (see e.g. \cite{brooks}).

\item  Let $T$ be a rooted tree (generally not binary) with leaf set $X$ and let $\Cc$ be the set of all clusters compatible with all the clusters in $c(T)$. Then $\Cc$ is bureaucratic.
\begin{proof}
We have $\Cc = \{C \subseteq X: C \cap C' \in \{C, C', \emptyset\} \mbox{ for all }  C' \in c(T)\}.$  $\Cc$ is also  the set of clusters that occur in at least one rooted phylogenetic $X-$ tree that refines $T$, that is:
$$\Cc = \bigcup_{T': c(T) \subseteq c(T')} c(T').$$ 
Suppose that $\Hc \subseteq \Cc$ is a hierarchy on $X$.  Then $\Hc \cup c(T)$ is also a hierarchy on $X$ since every element of $\Hc$ is compatible with every element of $c(T)$.
Let $\Hc'$ be any maximal hierarchy on $X$ containing $\Hc$. Then since $c(T) \subseteq \Hc'$, we have $\Hc' \subseteq \Cc$, and so, by definition, $\Cc$ is a bureaucracy.
\end{proof}

\item
Let $\Cc$ be a collection of subsets of $X$ that includes the trivial clusters and which satisfies the condition:
\begin{equation}
\label{closure}
A, B \in \Cc \mbox{ and } A \cap B \neq \emptyset \Rightarrow A \cup B \in \Cc.
\end{equation}
Then $\Cc$ is bureaucratic if and only if  $\Cc$ satisfies the covering condition in Lemma \ref{mainlem}(i). 

Condition (\ref{closure}) is a weakening of the condition
required for a `patchwork' set system on $X$ due to Andreas Dress and Sebastian B{\"o}cker (see e.g. \cite{sem}, where the covering condition of Lemma \ref{mainlem}(i) leads to an `ample patchwork').

\begin{proof}
The `only if' part follows from Lemma  \ref{mainlem}(i).  Conversely, suppose that (\ref{closure}) holds for a set system $\Cc$ that includes all the trivial clusters of $X$ and that satisfies the covering condition of Lemma~\ref{mainlem}(i).
Suppose that $\Hc \subseteq \Cc$ is a maximal hierarchy contained within $\Cc$. We show that $\Hc$  is maximum.  Suppose that this is not the case -- we will derive a contradiction (by constructing a larger hierarchy $\Hc'$ containing $\Hc$ but still lying within $\Cc$).  The assumption that $\Hc$ is not maximum implies that there exists a set $B \in \Hc$ which is the union of three or more disjoint sets $A_1, A_2, A_3,\ldots, A_k$, where $A_i \in \Hc$ (since the rooted tree associated with $\Hc$ has a vertex of degree $k \geq 3$). 
We consider two cases:
\begin{itemize}
\item[(i)] $B$ covers none of the sets from $A_1, A_2, A_3,\ldots, A_k$. 
\item[(ii)] $B$ covers one of the sets from $A_1, A_2, A_3,\ldots, A_k$. 
\end{itemize} 
We first show that Case (i) cannot arise under Condition  (\ref{closure}). Suppose to the contrary Case (i) arises. Then  for $i=1, \dots k$ there exists a set $C_i \in \Cc$ that contains $A_i$ and which is covered by $B$. For any pair $i,j$ with $i \neq j$,  if  $(B-C_i) \cap C_j = \emptyset$ then $C_j \subseteq C_i$. On the other hand, if $(B-C_i) \cap C_j \neq \emptyset$ then, by Condition  (\ref{closure}), 
$(B -C_i) \cup C_j \in \Cc$, which means that $B = (B-C_i) \cup C_j$ (otherwise $(B-C_i) \cup C_j$ an element of $\Cc$ strictly containing $C_j$ and strictly contained by $B$) and so
$C_i \subseteq C_j$.  Thus Case (i) requires that either  $C_i \subseteq C_j$ or $C_j \subseteq C_i$, which implies (again by the assumption that $B$ covers $C_i$ and $B$ covers $C_j$)  that $C_i=C_j$.  Since this identity holds for all distinct pairs $i,j$ it follows that $C_1, C_2, \ldots, C_k$ are the same set $C$ and this set contains 
$\bigcup_{i=1}^k A_i$ (since $A_i \subset C_i$).   But then  $B= \bigcup_{i=1}^k A_i  \subseteq C$ which contradicts the assumption that $B$ covers $C_1 (=C)$.

Thus only Case  (ii) can arise. In this case, suppose that $B$ covers $A_i$.  By assumption that $\Cc$ satisfies the covering condition described in Lemma \ref{mainlem}(i), $B-A_i \in \Cc$ holds, and so we can take $\Hc'= \Hc \cup \{B-A_i\}$ which provides the required contradiction.

\end{proof}

\item
Let $G = (X,E)$ be a connected graph. Let $\Cc$ be the set of subsets $Y \subseteq X$ such that $G[Y]$ is connected (where $G[Y]$ is the subgraph formed by deleting vertices not in $Y$, together with their incident edges).
Then $\Cc$ is bureaucratic.

Observe that taking $G$ to be a linear graph recovers Example (1). 
 
\begin{proof}
First note that $\Cc$ satisfies (P1), since $G$ itself is connected, as is each vertex by itself. Now suppose that $A_1,\ldots,A_k$, $k>2$, are disjoint clusters in $\Cc$ whose union, $A$, is also in $\Cc$. As $G[A]$ is connected, at least two clusters $A_i$, $A_j$ must contain adjacent vertices, in which case $G[A_i \cup A_j]$ is connected and $A_i \cup A_j \in \Cc$. The result now follows by Theorem \ref{mainthm}.
 
 An alternative proof is to apply Example (3) and note that $\Cc$ satisfies  Condition (\ref{closure}) and the covering condition of Lemma \ref{mainlem}(i).
 \end{proof}

\end{enumerate}

\section{Algorithmic applications}

\subsection{Maximum weight hierarchies}

In general, the problem of finding the largest hierarchy contained within a set of clusters is NP-hard \cite{day}. The problem becomes trivial in a bureaucratic collection since all maximal hierarchies are maximum. Less obvious, however, is the fact that finding a hierarchy with maximum {\em weight} can also be solved in polynomial time.

\begin{theorem}
Let $\Cc$ be a bureaucratic collection of clusters on $X$ and let $w:\Cc \longrightarrow \RR$ be a weight function on $\Cc$. The problem of finding the hierarchy $\Hc \subseteq \Cc$ such that $w(\Hc) = \sum_{A \in \Hc} w(A)$ is maximized can be solved in polynomial time.
\end{theorem}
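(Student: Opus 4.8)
The plan is to reduce the weighted problem to a purely structural search together with a trivial truncation of weights, and then to solve that search by dynamic programming over the members of $\Cc$. The starting observation is that any subcollection of a hierarchy that still contains $X$ is again a hierarchy, because laminarity is preserved under passing to subsets. Hence, for a \emph{fixed} hierarchy $\Hc' \subseteq \Cc$, the best subhierarchy one can extract from $\Hc'$ is simply $\{X\} \cup \{C \in \Hc' : w(C) > 0\}$. The role of bureaucracy is that we may always take the enclosing $\Hc'$ to be \emph{maximum} and to lie entirely in $\Cc$: by definition, the optimal hierarchy $\Hc^{\ast}$ extends to a maximum hierarchy $\Hc' \subseteq \Cc$, and $\Hc^{\ast}$ is then dominated by the positive-weight truncation of $\Hc'$ just described.

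Writing $w^{+}(C) := \max(0, w(C))$, this reasoning (checking both inequalities, using $\Hc^{\ast}\subseteq\Hc'$, $X\in\Hc^{\ast}$, and $w^{+}\ge 0$) gives the exact identity
\begin{equation*}
\max_{\Hc \subseteq \Cc \text{ a hierarchy}} w(\Hc) \;=\; w(X) + \max_{\Hc' \subseteq \Cc \text{ maximum}} \ \sum_{C \in \Hc' \setminus \{X\}} w^{+}(C).
\end{equation*}
So it suffices to find a maximum (hence binary) hierarchy in $\Cc$ of largest \emph{truncated} weight; the truncation renders every non-root cluster effectively nonnegative, which is exactly what makes the search tractable and local.

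To carry out the search I would use the recursive structure of maximum hierarchies. For each $A \in \Cc$ let $D(A)$ be the largest value of $\sum_{C} w^{+}(C)$ over all maximum hierarchies on $A$ built from clusters of $\Cc$ contained in $A$. By Lemma~\ref{mainlem}(ii), any $A$ with $|A|>1$ admits a split $A = B \sqcup B'$ with $B, B' \in \Cc$, and the subtrees on $B$ and on $B'$ may be optimized independently because laminarity across the two sides is automatic; this yields
\begin{equation*}
D(A) = w^{+}(A) + \max_{\substack{A = B \sqcup B' \\ B, B' \in \Cc}} \bigl( D(B) + D(B') \bigr), \qquad D(\{x\}) = w^{+}(\{x\}).
\end{equation*}
The answer is then $w(X) + \max_{X = B \sqcup B'} \bigl(D(B) + D(B')\bigr)$, and the optimal hierarchy itself is recovered by standard backtracking. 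Processing the sets of $\Cc$ in order of increasing cardinality and, for each $A$, scanning over $B \in \Cc$ with $B \subsetneq A$ while testing membership of $A \setminus B$ in $\Cc$, gives a running time polynomial in $|\Cc|$ and $|X|$.

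The two places where the real work lies are, first, justifying the reduction identity — that discarding every negative-weight cluster never destroys the hierarchy property, and that bureaucracy genuinely allows one to assume the enclosing hierarchy is maximum and wholly contained in $\Cc$; and second, verifying that the binary-split recurrence for $D$ is well-founded, i.e. that Lemma~\ref{mainlem}(ii) supplies a split $A = B \sqcup B'$ for every $A$ with $|A|>1$, so the maximum in the recurrence is taken over a nonempty family and bottoms out at the singletons guaranteed by (P1). Once these are in place, correctness of $D$ follows by a routine induction on $|A|$ and the polynomial bound is immediate. I expect the identity in the second paragraph, rather than the dynamic program, to be the conceptual crux, since it is precisely where the bureaucratic hypothesis is essential.
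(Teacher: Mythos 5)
Your proof is correct and follows essentially the same route as the paper: truncate negative weights to zero and use the bureaucratic property to reduce the search to maximum (binary) hierarchies contained in $\Cc$. The only difference is that where the paper invokes the `Hunting for Trees' algorithm of Bryant (1996) as a black box for the final optimisation step, you supply an explicit (and correct) dynamic program over the sets of $\Cc$, which makes the argument self-contained.
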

\begin{proof}
If there are any clusters $A \in \Cc$ with negative weight $w(A)$ then set their weights to zero. It follows then that the weight of any maximum hierarchy $\Hc \subseteq \Cc$ equals the weight of the maximum weight hierarchy contained within $\Hc$. The `Hunting for Trees' algorithm of \cite{bry} can now be used to recover the maximum hierarchy of maximum weight.
\end{proof}

\subsection{Parsimony problems on networks}

Consider a set $\Cc$ of clusters on $X$ and let $f: X \rightarrow \Ac$ be a function that assigns each element $x\in X$ a state $f(x)$ in a finite set $\Ac$ ($f$ is referred to in phylogenetics as a  (discrete) {\em character}).  Suppose we have a non-negative function $\delta$ on $\Ac \times \Ac$ where $\delta(a,b)$ assigns a penalty score for changing
state $a$ to $b$ for each pair $a,b \in \Ac$ (the default option is to to take $\delta(a,b)=1$ for all $a\neq b$ and $\delta(a,a)=0$ for all $a$).  

Given any rooted $X-$tree $T$, with vertex set $V$ and arc set $E$,  let $l(f, T, \delta)$ denote the {\em parsimony score} of $f$ on $T$ relative to $\delta$; that is,
$$l(f, T, \delta) = \min_{F: V \rightarrow \Ac, F|X = f}\left\{\sum_{(u,v) \in E} \delta (F(u), F(v))\right\}.$$
In words, $l(f, T, \delta)$ is the minimum sum of $\delta$-penalty scores that are required in order to extend $f$ to an assignment of states to all the vertices of $T$. This quantity can be 
calculated for a given $T$ by well-known dynamic programming techniques (see e.g. \cite{sem}).  
 Let $l(f, \Cc, \delta)$ (respectively, $l_{\rm bin}(f, \Cc)$) denote the minimal value of $l(f,T, \delta)$ among all trees $T$ (respectively, all {\em binary} trees) that have their clusters in $\Cc$.   Then we have the following general result.

\begin{theorem}
\label{app3}
Suppose that  $\Cc$ is contained within a bureaucratic collection $\Cc'$  of subsets of $X$ and  $f: X \rightarrow \Ac$.
There is an algorithm for computing $l(f, \Cc, \delta)$ with running time  polynomial in $n=|X|, |\Ac|$ and $|\Cc'|$.  Moreover, the algorithm can be extended to construct a rooted phylogenetic $X-$tree having all its clusters in $\Cc$ and with parsimony score equal to $l(f, \Cc, \delta)$ in polynomial time.
\end{theorem}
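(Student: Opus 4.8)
The plan is to solve the problem by dynamic programming over the clusters of $\Cc'$, exploiting the bureaucratic structure of $\Cc'$ to replace the (exponentially many) arbitrary multifurcations of a candidate tree by (polynomially many) binary refinements that live \emph{inside} $\Cc'$. The starting observation is Sankoff's algorithm: on a fixed rooted tree the parsimony score is computed bottom-up, so that if a vertex $v$ has cluster $C$, is assigned state $a$, and its children carry clusters $C_1,\ldots,C_m$, then the optimal cost of the subtree below $v$ equals $\sum_i \min_{b \in \Ac}[\delta(a,b)+g(C_i,b)]$, where $g(C_i,b)$ is the optimal subtree cost below the $i$-th child given state $b$. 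To optimise over all trees with clusters in $\Cc$ one would also like to minimise over the partition $C = C_1 \sqcup \cdots \sqcup C_m$ into parts of $\Cc$, but there are exponentially many such partitions, so a naive recursion is intractable.

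This is where bureaucracy enters. For $C \in \Cc'$ and $a \in \Ac$ I would introduce a second quantity $h(C,a)$: the optimal cost of covering $C$ by \emph{genuine} children (clusters in $\Cc$) that are attached, through a binary tree of \emph{virtual} internal nodes each carrying the fixed state $a$, to the state-$a$ parent; equivalently $h(C,a)$ optimises $\sum_j \min_{b}[\delta(a,b)+g(C_j,b)]$ over all partitions $C = C_1 \sqcup \cdots \sqcup C_r$ into parts of $\Cc$. The point is that $h$ satisfies a \emph{binary} recurrence inside $\Cc'$,
\[
h(C,a) = \min_{\substack{C = A \sqcup B\\ A,B \in \Cc'}} \bigl[\,\mu(A,a) + \mu(B,a)\,\bigr],
\]
where $\mu(Y,a)$ is the smaller of $h(Y,a)$ and, only when $Y \in \Cc$, the single-genuine-child cost $\min_{b}[\delta(a,b)+g(Y,b)]$; for a singleton one sets $\mu(\{x\},a)=\delta(a,f(x))$. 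Finally $g(C,a)=h(C,a)$ for $C \in \Cc$ with $|C|>1$, with base case $g(\{x\},a)=0$ if $a=f(x)$ and $+\infty$ otherwise, and the sought value is $l(f,\Cc,\delta)=\min_{a \in \Ac} g(X,a)$ (this presupposes that $X$ and all singletons lie in $\Cc$, as is necessary for any admissible tree to exist; otherwise the recurrence returns $+\infty$). Each genuine internal vertex receives $\ge 2$ genuine children, so the structures explored are honest rooted phylogenetic trees.

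Correctness has two directions. That the DP never undercounts is immediate: every configuration it evaluates uses genuine clusters from $\Cc$ and forces each virtual node to the state of its genuine ancestor, so its value is the true parsimony score of a concrete tree with clusters in $\Cc$ (the virtual edges join equal states, cost nothing, and may be contracted); hence the DP value is at least $l(f,\Cc,\delta)$. For the reverse inequality I would take an optimal tree $T$ with $c(T) \subseteq \Cc$ and resolve each multifurcation into a binary tree inside $\Cc'$: since $c(T)$ is a hierarchy contained in the bureaucratic collection $\Cc'$, it extends to a maximum hierarchy $\Hc' \subseteq \Cc'$, i.e.\ to a binary refinement of $T$ with clusters in $\Cc'$; concretely a vertex whose children partition its cluster as $C_1 \sqcup \cdots \sqcup C_m$ is resolved one merge at a time, each merge producing a set $C_i \cup C_j \in \Cc'$ guaranteed by property (P2) of Theorem~\ref{mainthm}. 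Setting every virtual node to its genuine ancestor's state realises $l(f,T,\delta)$ as one of the values compared by the DP, so the DP value is at most $l(f,\Cc,\delta)$, and the two bounds give equality.

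For complexity, the tables for $g$ and $h$ have $O(|\Cc'|\,|\Ac|)$ entries; processing clusters in order of increasing size, each $h(C,a)$ is a minimum over the $O(|\Cc'|)$ admissible splits $C=A\sqcup B$ (for each $A\in\Cc'$ test $C\setminus A\in\Cc'$) of precomputed values $\mu(\cdot,a)$, giving total time polynomial in $n$, $|\Ac|$ and $|\Cc'|$. The optimal tree is recovered by back-tracking through the minimising choices, keeping only the genuine clusters and contracting the virtual nodes, again in polynomial time. The crux of the argument — and the only place the hypothesis that $\Cc'$ is bureaucratic is used — is the reverse inequality: it is precisely this property of $\Cc'$, through the iterated application of (P2), that guarantees every multifurcation of a $\Cc$-tree can be binary-resolved within $\Cc'$, which is what licenses replacing the intractable partition search by the polynomial binary recurrence for $h$.
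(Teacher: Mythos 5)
Your proposal is correct and follows essentially the same strategy as the paper's proof: a binary dynamic program over the clusters of $\Cc'$, in which vertices whose clusters lie in $\Cc' \setminus \Cc$ are forced to inherit their parent's state at zero cost, with the bureaucratic property of $\Cc'$ guaranteeing that every tree with clusters in $\Cc$ admits a binary refinement inside $\Cc'$, and with zero-cost virtual edges contracted at the end. The only differences are bookkeeping: the paper encodes your $g$/$h$/$\mu$ split into a single table $L'(Y,a)$ via a cluster-dependent penalty $\delta_Y$, and your explicit remark that $X$ and the singletons must lie in $\Cc$ is a small point the paper leaves implicit.
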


\begin{proof}

For  any subset $Y$ of $X,$ let

$$\delta_Y(a,b) = \begin{cases} \delta(a,b), & \mbox{ if $Y \in \Cc$;}\\0, & \mbox{ if $Y \not \in \Cc$ and $a=b$;}\\ \infty, & \mbox{ otherwise;} \end{cases}$$
and for any rooted phylogenetic $X-$tree $T$, let $$l'(f, T, \delta) := \min_{F: V \rightarrow \Ac, F|X = f}\left\{\sum_{(u,v) \in E} \delta_{c(v)} (F(u), F(v))\right\},$$
where $c(v)$ is the cluster of $T$ associated with $v$.

Let $l'(f, \Cc', \delta)$ (respectively, $l'_{\rm bin}(f, \Cc', \delta)$) be the minimal value of $l'(f, T, \delta)$ over all trees (respectively, all binary trees) with clusters in $\Cc'$.
By the definition of $\delta_Y$, we have:

\begin{equation}
\label{firstidentity}
l(f, \Cc, \delta) = l'(f, \Cc', \delta),
\end{equation}
and by the assumption that $\Cc'$ is bureaucratic we have:
\begin{equation}
\label{seconddentity}
l'(f, \Cc', \delta) = l'_{\rm bin}(f, \Cc', \delta).
\end{equation}
We now describe how $l'_{\rm bin}(f, \Cc', \delta)$ can be efficiently calculated by dynamic programming.

For an element $a \in \Ac$ and $Y \in \Cc'$, let
$L'(Y, a)$ be the minimum value of $l'(f|Y, T, \delta)$ across all binary trees $T$  having leaf set $Y$ and clusters in $\Cc'$, in which the root is assigned state $a$.     

For $|Y|=1$, say $Y = \{y\}$, we have $$L'(Y,a) = \begin{cases}
0, & \mbox{ if } f(y) = a; \\
\infty, & \mbox{ otherwise}
\end{cases}$$
and for $|Y|>1$, we have
\begin{equation}
L'(Y, a) = \min_{Y_1, Y_2 \in \Cc', a_1, a_2 \in \Ac}\left\{L'(Y_1, a_1) + \delta_{Y_1}(a, a_1) + L'(Y_2, a_2) + \delta_{Y_2}(a, a_2):Y_1 \sqcup Y_2 = Y\right\}.
\label{eq:lform}
\end{equation}
Now,  $$l'_{\rm bin} (f, \Cc' \delta) = \min_{a \in \Ac} L'(X,a).$$
Notice that when one evaluates $L'(X,a)$ using the above recursion (Eqn. (\ref{eq:lform})), it  is sufficient to compute $L'(Y,a)$ for just the sets $Y \in \Cc'$ rather than all 
subsets of $X$.   

Thus, in view of Eqns. (\ref{firstidentity}) and (\ref{seconddentity}), one can compute $l(f, \Cc, \delta)$ in time polynomial in $n=|X|, |\Ac|$ and $|\Cc'|$.  Moreover, by suitable book-keeping along the way, one  can
construct a rooted binary phylogenetic $X-$tree with clusters in $\Cc'$ and with a parsimony score equal to $l_{\rm bin}(f, \Cc',  \delta)$;  by collapsing all edges of this tree 
that have a $\delta$-score equal to 0 we obtain a rooted  phylogenetic $X-$tree with clusters in $\Cc$ and with parsimony score equal to $l(f, \Cc,  \delta)$.
\end{proof}

We note that this result has been described in the particular case when $\Cc$ is the bureaucracy described in example (2) above,  and where $f$ maps to a set $A$ with only two elements   \cite{hus2}.   We provide a second application, to phylogenetic networks, based on Example (1) above, of intervals as bureaucratic set systems.  

Let $\Nc$ be a rooted binary phylogenetic network on $X$. We say that $\Nc$ is {\em planar} if it can be drawn in the plane so that all the leaves and the root all lie on the outer face.   Let $sw(\Nc)$ denote the set of `soft-wired' clusters in $\Nc$ (the union of the cluster sets of all trees embedded in $\Nc$; see e.g. \cite{hus}). A simple example is shown in Fig.~\ref{fig1}(c).

\begin{corollary}
 Suppose that $\Nc$ is a  binary and planar phylogenetic network on $X$,  and  $f:X \rightarrow \Ac$. There is an algorithm for computing $l(f, sw(\Nc))$ with running time  polynomial time in $n$.
\end{corollary}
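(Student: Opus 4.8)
The plan is to reduce the statement to Theorem~\ref{app3} by exhibiting a bureaucratic collection $\Cc'$ with $sw(\Nc)\subseteq\Cc'$ and $|\Cc'|$ polynomial in $n$. The natural candidate, suggested by Example~(1), is the set of intervals of $X$ relative to the linear order of the leaves read off a planar drawing of $\Nc$: I would fix an embedding in which the root and all leaves lie on the outer face, and let $x_1,\dots,x_n$ be the order in which the leaves are met as one traverses the outer boundary from one side of the root to the other. Let $\Cc'$ be the collection of intervals $\{x_i,\dots,x_j\}$ for this order. By Example~(1), $\Cc'$ is bureaucratic, and it has exactly $\binom{n}{2}+n=O(n^2)$ elements, of which the singletons and $X$ are the trivial clusters.

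The heart of the argument is the containment $sw(\Nc)\subseteq\Cc'$, i.e. that every soft-wired cluster is an interval in this order. Recall that each $C\in sw(\Nc)$ is a cluster of some tree $T$ obtained from $\Nc$ by a switching (a choice of one in-edge at each reticulation, followed by suppression of degree-$2$ vertices). Since $T$ is a subgraph of $\Nc$, it inherits the planar drawing, so $T$ is a rooted tree drawn in the plane with root and leaves on the outer face in the same order $x_1,\dots,x_n$. I would then prove the elementary planarity fact that in such a drawing the set of leaves below any vertex $v$ is an interval: if $x_a,x_c$ lay below $v$ but some $x_b$ with $a<b<c$ did not, then letting $u$ be the last common vertex of the root-to-$v$ and root-to-$x_b$ paths, the subtrees of $u$ containing $v$ and containing $x_b$ would be drawn in disjoint angular sectors, forcing all leaves on $v$'s side to precede, or all to follow, $x_b$ --- contradicting $a<b<c$. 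Hence every cluster of $T$, and so every element of $sw(\Nc)$, is an interval.

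With the containment established, Theorem~\ref{app3} applied to $\Cc=sw(\Nc)$ and to $\Cc'$ the intervals yields an algorithm for $l(f,sw(\Nc))$ running in time polynomial in $n$, $|\Ac|$ and $|\Cc'|=O(n^2)$; for the default character metric one may take $|\Ac|\le n$, so the running time is polynomial in $n$. The one ingredient of Theorem~\ref{app3} that is not purely a function of $\Cc'$ is the penalty $\delta_Y$, which requires deciding, for each of the $O(n^2)$ intervals $Y$, whether $Y\in sw(\Nc)$.

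I expect this last point --- efficiently identifying which intervals are genuine soft-wired clusters --- to be the main obstacle, since deciding soft-wired cluster containment is hard for arbitrary networks and the number of switchings is exponential in the number of reticulations. The plan is to exploit the planar structure: for an interval $Y=\{x_i,\dots,x_j\}$, membership in $sw(\Nc)$ amounts to the existence of an edge $e=(u,v)$ and a switching under which the leaves reachable from $v$ are exactly $Y$, and in a planar drawing the boundary leaves flanking $x_i$ and $x_j$ pin down where such a separating ``flap'' must detach from the rest of $\Nc$. I would turn this into a polynomial reachability/feasibility test on $\Nc$, run once per interval, so that the whole table of values $\delta_Y$ (equivalently, all of $sw(\Nc)$) is assembled in time polynomial in $n$; this is the step whose details need the most care.
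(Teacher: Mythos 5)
Your reduction is exactly the paper's: the published proof consists of ordering the leaves $x_1,\dots,x_n$ around the outer face, observing that any tree embedded in $\Nc$ inherits this planar order so that all of its clusters are intervals, concluding $sw(\Nc)\subseteq\Cc'$ where $\Cc'$ is the interval bureaucracy of Example (1), and invoking Theorem~\ref{app3}. Your angular-sector argument for why the clusters of a planarly drawn rooted tree are intervals correctly fills in a detail that the paper asserts without proof. Where you go beyond the paper is in flagging that the algorithm of Theorem~\ref{app3} needs, for each of the $O(n^2)$ intervals $Y$, the value of the indicator ``$Y\in sw(\Nc)$'' in order to evaluate $\delta_Y$ --- equivalently, it needs $\Cc=sw(\Nc)$ itself, not merely a bureaucratic superset. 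This is a genuine requirement, and the paper's proof is silent on it, so you have not introduced a defect but rather exposed a gap in the argument as written: for arbitrary networks the soft-wired cluster containment problem is NP-hard, so the planarity hypothesis must be used a second time, to show that membership of an interval in $sw(\Nc)$ is polynomial-time decidable. Your proposed ``flap-detachment'' feasibility test is the right kind of fix, but as you yourself acknowledge it is only a sketch; until that reachability argument is carried out in detail, your write-up (like the paper's) does not yet constitute a complete proof of the corollary. Everything else in your proposal --- the containment $sw(\Nc)\subseteq\Cc'$, the bound $|\Cc'|=O(n^2)$, and the appeal to Theorem~\ref{app3} --- is correct and matches the paper.
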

\begin{proof}
Let $x_1,\ldots,x_n$ be the ordering of $X$ given by their positions around the outer face in a planar embedding of $\Nc$, where $x_1$ and $x_n$ come immediately after and before the root. Then any tree $T$ embedded in $\Nc$ can be ordered so that the leaves are in order $x_1,\ldots,x_n$, implying that the clusters of $T$ are all of the form $\{x_i,x_{i+1},\ldots,x_j\}$ for some $1 \leq i \leq j \leq n$. It follows that the set $sw(\Nc)$ is contained in the set of intervals of  $X=\{x_1,\ldots,x_n\}$ (Example 1, above).  The corollary now follows from Theorem~\ref{app3}. 
\end{proof}

We end this paper by posing a computational problem. \\

\noindent {\bf Question.} Is there an algorithm for deciding whether or not $\Cc$ is bureaucratic that runs in time polynomial in $|\Cc|$ and $|X|$?

\bigskip
\noindent{\bf Acknowledgements:} 
We thank the  NZ Marsden Fund and the Allan Wilson Centre for Molecular Ecology and Evolution for supporting this work.

\end{document}